\newtheorem{prop}{Proposition}
\newtheorem{exa}{Example}
\newenvironment{proof}{\noindent \textbf{Proof:}}{\hfill $\Box$ \\ ~}
\newenvironment{proofnoqed}{\noindent \textbf{Proof:}}{}
\newcommand{\inl}{\mathsf{inl}}
\newcommand{\inr}{\mathsf{inr}}
\newcommand{\e}{\mathsf{e}}
\newcommand{\m}{\mathsf{m}}
\newcommand{\JJ}{\mathcal{J}}
\newcommand{\Lan}{\mathrm{Lan}}
\newcommand{\C}{\mathcal{C}}
\newcommand{\op}{\mathrm{op}}
\newcommand{\rev}{\mathrm{rev}}
\newcommand{\Var}{\mathsf{Var}}
\newcommand{\Tm}{\mathsf{Tm}}
\renewcommand{\v}{`}
\newcommand{\I}{\mathsf{I}}
\newcommand{\ot}{\otimes}
\newcommand{\tto}{\Rightarrow}
\newcommand{\al}{\alpha}
\newcommand{\lam}{\lambda}
\newcommand{\id}{\mathsf{id}}
\newcommand{\comp}{\circ}
\newcommand{\Nf}{\mathsf{Nf}}
\newcommand{\J}{\mathsf{J}}
\newcommand{\vot}{\mathbin{`\otimes}}
\newcommand{\bt}{\mathbin{\boxtimes}}
\newcommand{\emb}{\mathsf{emb}}
\newcommand{\nfx}[1]{\llbracket #1 \rrbracket}
\newcommand{\nf}{\mathsf{nf}}
\newcommand{\nmx}[1]{\langle\!\!\langle #1 \rangle\!\!\rangle}
\newcommand{\nm}{\mathsf{nm}}
\newcommand{\rv}{^\mathsf{r}}
\newcommand{\Nfr}{\mathsf{Nf}\rv}
\newcommand{\Jr}{\mathsf{J}\rv}
\newcommand{\votr}{\mathbin{`\otimes}\rv}
\newcommand{\embr}{\mathsf{emb}\rv}
\newcommand{\nfxr}[1]{\llbracket #1 \rrbracket\rv}
\newcommand{\nfr}{\mathsf{nf}\rv}
\newcommand{\nmxr}[1]{\langle\!\!\langle #1 \rangle\!\!\rangle\rv}
\newcommand{\nmr}{\mathsf{nm}\rv}
\newcommand{\CTm}{\textbf{Tm}}
\newcommand{\CNf}{\textbf{Nf}}
\newcommand{\refl}{\textsf{refl}}
\newcommand{\trans}{\textsf{trans}}
\newcommand{\subst}{\textsf{subst}}
\newcommand{\femb}{\textsf{femb}}
\newcommand{\fnfemb}{\textsf{fnfemb}}
\newcommand{\dotminus}{\stackrel{.}{-}}
\begin{document}

\title{Coherence for Skew-Monoidal Categories}
\author{Tarmo Uustalu
\institute{Institute of Cybernetics at Tallinn University of Technology, Estonia}
\email{tarmo@cs.ioc.ee}
}
\def\titlerunning{Coherence for Skew-Monoidal Categories}
\def\authorrunning{T.~Uustalu}

\maketitle

\begin{abstract}
I motivate a variation (due to K.~Szlach\'anyi) of monoidal
categories called skew-monoidal categories where the unital and
associativity laws are not required to be isomorphisms, only natural
transformations. Coherence has to be formulated differently than in
the well-known monoidal case. In my (to my knowledge new) version, it
becomes a statement of uniqueness of normalizing rewrites. I present
a proof of this coherence theorem and also formalize it fully in the
dependently typed programming language Agda.
\end{abstract}

\section{Introduction}

Mac Lane's monoidal categories are ubiquitous in category theory,
mathematics and computer science. One of their remarkable properties
is the coherence theorem stating that, in any monoidal category, any
two parallel maps that are ``formal'' (in the sense that they are put
together from the identity, composition, the tensor, the two unitors
and the associator) are equal. In other words, in the free monoidal
category over a given set of objects, any two maps with the same
domain and codomain are equal. This theorem is both beautiful and
extremely useful. (There is also a simple necessary and sufficient
condition for existence of a map between two given objects in the free
monoidal category.)

Szlach\'anyi \cite{Szl:skemcb} has recently introduced a variation of
monoidal categories, called skew-monoidal categories. The important
difference from monoidal categories is that the unitors and associator
are not required to be isomorphisms. His study was motivated by
structures from quantum physics. In my joint work with Altenkirch and
Chapman \cite{ACU:monnnb}, I ran into the same definition when
generalizing monads to non-endofunctors.

In a free skew-monoidal category over a set of objects, general
uniqueness of parallel maps is lost. But it is still only reasonable
to enquire whether some kind of coherence theorems are possible like
they exist for many types of categories, e.g., Cartesian categories
etc.

In this paper, I state and prove one such theorem. I obtained it by
playing with Beylin and Dybjer's formalization \cite{BD:extpcm} of Mac
Lane's coherence theorem. Essentially, I looked at the high-level
proof structure and checked what can be kept in the skew-monoidal case
and what must necessarily be given up at least if one sticks to the
same overall proof idea. The theorem states that maps to certain
objects---``normal forms''---are unique. As a corollary, the same
holds also for maps from ``reverse-normal forms''. For maps with
different domains and co-domains no information is given.

I have formalized this result in the dependently typed programming
(DTP) language Agda. I found it a very interesting exercise. Of course
this is by no means uncommon with DTP projects, but certainly a
project like this forces one to think carefully about deep matters in
programming with the identity type in intensional type theory.

The structure of this short paper is as follows. I first define
skew-monoidal categories, compare them to monoidal categories, and
give some examples. Then I present the coherence statement and proof
(as formalized in Agda), describe the rewriting intuition behind it
and also hint at what everything means category-theoretically.

The accompanying Agda formalization of the whole development and more
(approximately 750 lines; self-contained, only propositional equality (the
identity type) is taken from the library) is available from 
\url{http://cs.ioc.ee/~tarmo/papers/}.

\section{Skew-monoidal categories}

Skew-monoidal categories of Szlach\'anyi \cite{Szl:skemcb} are a
variation of monoidal categories, originally due to Mac Lane
\cite{ML:natac}.

A \emph{(left) skew-monoidal category} is a category $\C$ together
with a distinguished object $\I$, a functor $\ot : \C \times \C \to
\C$ and three natural transformations $\lam$, $\rho$, $\al$ typed 
\[
\begin{array}{c}
\lam_{A} : \I \ot A \to A \\
\rho_{A} : A \to A \ot \I\\
\al_{A,B,C} : (A \ot B) \ot C \to A \ot (B \ot C)
\end{array}
\]
satisfying the laws
\[
\mathrm{(a)}  
\xymatrix@R=1.5pc@C=0.2pc{
    & \I \ot \I \ar[dr]^-{\lambda_\I} & \\
    \I \ar[ur]^-{\rho_\I} \ar@{=}[rr] & & \I
    }
\quad
\mathrm{(b)}  
\xymatrix@R=1.3pc{
      (A\ot \I) \ot B \ar[r]^{\alpha_{A,\I,B}}
      & A \ot (\I\ot B) \ar[d]^{A\ot \lambda_{B}}\\
      A \ot B \ar@{=}[r] \ar[u]^{\rho_{A}\ot B}&  A \ot B 
    }
\]
\[
\mathrm{(c)}  
\xymatrix@C=0.2pc@R=1.5pc{
  (\I \ot A) \ot B \ar[dr]_{\lambda_A \ot B} \ar[rr]^{\alpha_{\I,A,B}} 
           & &  \I \ot (A \ot B) \ar[dl]^{\lambda_{A \ot B}}\\
  & A \ot B & 
    }
\quad
\mathrm{(d)}  
\xymatrix@C=0.2pc@R=1.5pc{
  (A \ot B) \ot \I \ar[rr]^{\alpha_{A,B,\I}} 
           & &  A \ot (B \ot \I) \\
  & A \ot B \ar[ul]^{\rho_{A \ot B}} \ar[ur]_{A \ot \rho_B} & 
    }
\]
\[
\mathrm{(e)} 
\xymatrix@R=1.5pc@C=2.5pc{
(A\ot (B \ot C)) \ot D \ar[rr]^{\alpha_{A,B \ot C,D}}
  & & A\ot ((B \ot C)\ot D) \ar[d]^{A\ot \alpha_{B,C,D}}
  \\
((A\ot B) \ot C) \ot D \ar[u]^{\alpha_{A,B,C} \ot D}
      \ar[r]^{\alpha_{A\ot B,C,D}}
  & (A\ot B) \ot (C \ot D) \ar[r]^{\alpha_{A,B,C\ot D}}
    & A\ot (B \ot (C \ot D))
}
\]
A monoidal category is obtained, if $\lam$, $\rho$, $\al$ are
additionally required to be isomorphisms. Here this requirement is not
made.

When dropping the requirement of isomorphisms from the definition of
monoidal categories, the key question is of course which half of each
of the three isomorphisms should be kept and how the laws (coherence
conditions) should be stated. In a left skew-monoidal category $\lam$
``reduces'', $\rho$ ``expands'' and $\al$ ``associates to the
right''. With this decision, the monoidal category laws (c), (d), (e)
can be stated in only one way. But for (a) there are two choices
($\lam_\I \comp \rho_\I = \id_\I$ and $\rho_\I \comp \lam_\I = \id_{\I
  \ot \I}$) and for (b) even three. The ``correct'' options turn to be
those we have chosen.

Notice also that (a-e) are directed versions of the original Mac Lane
axioms \cite{ML:natac}. Later Kelly \cite{Kel:maclcc} discovered that
(a), (c), (d) can be derived from (b) and (e). For skew-monoidal
categories, this is not the case!

There is also an analogous notion of a \emph{right skew-monoidal
  category}. It is important to realize that the opposite category
$\C^\op$ of a left skew-monoidal category $\C$ is right skew-monoidal,
not left-skew monoidal. But the ``reverse'' version $(\C^\op)^\rev$ of
$\C^\op$ (obtained by swapping the arguments of $\otimes$ and also
$\lam$ and $\rho$) is left skew-monoidal.

In the rest of this text, my focus is on left skew-monoidal
categories. Speaking of a skew-monoidal category without specifying
its skew, I mean the left skew.

Here are some examples of skew-monoidal categories.

\begin{exa}
A simple example of a skew-monoidal category resulting from
skewing a numeric addition monoid is the following.

View the partial order $(\mathbb{N}, \leq)$ of natural numbers as a
thin category. Fix some natural number $n$ and define $\I = n$ and $x
\otimes y = (x \dotminus n) + y$ where $\dotminus$ is ``truncating
subtraction''. We have $\lambda_x : (n \dotminus n) + x = 0 + x = x$,
$\rho_x : x \leq x \mathbin{\max} n = (x \dotminus n) + n$,
$\alpha_{x,y,z} : (((x \dotminus n) + y) \dotminus n) + z \leq (x
\dotminus n) + (y \dotminus n) + z$ (by a small case analysis).
\end{exa}

\begin{exa}
The category of pointed sets and point-preserving functions has the
following skew-monoidal structure. 

Take $\I = (1, \ast)$ and $(X, p) \otimes (Y, q) = (X + Y, \inl~p)$
(notice the ``skew'' in choosing the point). We define $\lam_X : (1,
\ast) \ot (X, p) = (1+X, \inl~\ast) \to (X, p)$ by $\lam_X~(\inl~\ast)
= p$, $\lam_X~(\inr~x) = x$ (this is not injective).  We let $\rho_X :
(X, p) \to (X + 1, p) =(X, \inl~p) \ot (1, \ast)$ by $\rho_X~x =
\inl~x$ (this is not surjective). Finally we let $\al_{X,Y,Z} : ((X, p)
\ot (Y, q)) \ot (Z, r) = ((X + Y) + Z, \inl~(\inl~p)) \to (X + (Y + Z),
\inl~p) = (X, p) \ot ((Y, q) \ot (Z, r))$ be the obvious isomorphism.
\end{exa}

\begin{exa}
Given a monoidal category $(\C, \I, \ot)$. Given also a lax
monoidal comonad $(D, \e, \m)$ on $\C$. The category $\C$ has a
skew-monoidal structure given by $\I^D = \I$, $A \ot^D B = A \ot
D~B$. The unitors and associator are the following:
\[
\begin{array}{c}
\lam^D_A = \xymatrix{\I \ot D~A \ar[r]^-{\lam_{DA}} & D~A \ar[r]^-{\epsilon_A} & A} \\
\rho^D_A = \xymatrix{A \ar[r]^-{\rho_A} & A \ot \I \ar[r]^-{A \ot \e} & A \ot D~\I} \\
\hspace*{-3mm}
\al^D_{A,B,C} = \xymatrix@C=2pc{(A \ot D~B) \ot D~C 
                   \ar[r]^-{(A \ot DB) \ot \delta_C}
                 & (A \ot D~B) \ot D~(D~C) \ar[r]^-{\al_{A,DB,DC}}
                 & A \ot (D~B \ot D~(D~C)) \ar[r]^-{A \ot \m_{B,C}}
                 & A \ot D~(B \ot D~C)}
\end{array}
\]
A similar skew-monoidal category is also obtained with an oplax
monoidal monad.
\end{exa}

\begin{exa}
Consider two categories $\JJ$ and $\C$ and a functor $J : \JJ \to \C$.
The functor category $[\JJ, \C]$ has a skew-monoidal structure given
by $\I = J$, $F \otimes G = \mathrm{Lan}_J F \cdot G$ (assuming that
the left Kan extension $\Lan_J\, F : \C \to \C$ exists for every $F : \JJ
\to \C$).  The unitors and associator are the canonical natural
transformations $\lam_F : \Lan_J~J \cdot F \to F$, $\rho_F : F \to
\Lan_J~F \cdot J$, $\al_{F,G,H} : \Lan_J~(\Lan_J~F \cdot G) \cdot H
\to \Lan_J~F \cdot \Lan_J~G \cdot H$. This category becomes properly
monoidal under certain conditions on $J$. $\rho$ is an isomorphism, if
$J$ is fully-faithful. $\lam$ is an isomorphism, if $J$ is
dense. (This is the example from our relative monads work
\cite{ACU:monnnb}. Relative monads on $J$ are skew-monoids in the
skew-monoidal category $[\JJ, \C]$.)
\end{exa}

\section{The coherence theorem}

I now give a sufficient criterion for equality of two parallel maps in
the free skew-monoidal category.

I first present the minimal technical development leading to a
statement and proof of the result, not commenting at all on what
everything means category-theoretically. (This development follows the
Agda formalization.) Then I give a rewriting ``interpretation'' of the
story. Finally I explain the categorical meaning of the result.

The objects of the free skew-monoidal category over a set $\Var$
of objects are given by the set of ``object expressions'' $\Tm$
defined inductively as follows:
\[
\infer{\v X : \Tm}{X : \Var}
\quad
\infer{\I : \Tm}{}
\quad
\infer{A \ot B : \Tm}{A : \Tm & B : \Tm}
\]

The maps between two objects $A$ and $B$ are given by the set $A \tto
B$ of ``map expressions'' quotiented by the relation $\doteq$ of
``derivable equality''. The former is defined inductively by the rules
\[
\renewcommand{\arraystretch}{2}
\begin{array}{c}
\infer{\id : A \tto A}{}
\quad
\infer{f \comp g : A \tto C}{f : B \tto C & g : A \tto B}
\\
\infer{f \ot g : A \ot B \tto C \ot D}{f : A \tto C & g : B \tto D}
\\
\infer{\lam : I \ot A \tto A}{}
\quad
\infer{\rho : A \tto A \ot I}{}
\quad
\infer{\al : (A \ot B) \ot C \tto A \ot (B \ot C)}{}
\end{array}
\]
while the latter is defined inductively by the rules
\[
\renewcommand{\arraystretch}{2}
\begin{array}{c}
\infer{f \doteq f}{}
\quad
\infer{g \doteq f}{f \doteq g}
\quad
\infer{f \doteq h}{f \doteq g & g \doteq h}
\quad
\infer{f \comp h \doteq g \comp k}{f \doteq g & h \doteq k}
\quad
\infer{f \ot h \doteq g \ot k}{f \doteq g & h \doteq k}
\\
\infer{\id \comp f \doteq f}{}
\quad
\infer{f \doteq f \comp \id}{}
\quad
\infer{(f \comp g) \comp h \doteq f \comp (g \comp h)}{}
\\
\infer{\id \ot \id \doteq \id}{}
\quad 
\infer{(h \comp f) \ot (k \comp g) \doteq h \ot k \comp f \ot g}{}
\\
\infer{\lam \comp \id \ot f \doteq f \comp \lam}{}
\quad
\infer{\rho \comp f \doteq f \ot \id \comp \rho}{}
\quad
\infer{\al \comp (f \ot g) \ot h \doteq f \ot (g \ot h) \comp \al}{}
\\
\infer{\lam \comp \rho \doteq \id}{}
\quad
\infer{\id \doteq \id \ot \lam \comp \al \comp \rho \ot \id}{}
\\
\infer{\lam \comp \al \doteq \lam \ot \id}{}
\quad
\infer{\al \comp \rho \doteq \id \ot \rho}{}
\quad
\infer{\al \comp \al \doteq \id \ot \al \comp \al \comp \al \ot \id}{}
\end{array}
\]

We define ``normal forms'' of object expressions as the set $\Nf$
defined inductively by
\[
\infer{\J : \Nf}{}
\quad
\infer{X \vot N : \Nf}{X : \Var & N : \Nf}
\]
Normal forms embed into object expressions via the function $\emb :
\Nf \to \Tm$ defined recursively by
\[
\begin{array}{l}
\emb~\J = \I \\
\emb~(X \vot N) = \v X \ot \emb~N
\end{array}
\]

Let $\nfx{-} : \Tm \to \Nf \to \Nf$ be the function defined
recursively by the element of $\Tm$ by
\[
\begin{array}{l}
\nfx{\v X}~N = X \vot N \\
\nfx{\I}~N = N \\
\nfx{A \otimes B}~N = \nfx{A}~(\nfx{B}~N)
\end{array}
\]
Every object expression is assigned a normal form with the
normalization function $\nf : \Tm \to \Nf$ defined by
\[
\nf~ A = \nfx{A}~ \J
\]

We can make some first important observations.

\begin{prop}\label{fnf}~ 
\begin{enumerate}
\item For any $f : A \tto B$ and $N : \Nf$, $\nfx{A}~N = \nfx{B}~N$.
\item For any $f : A \tto B$, $\nf~ A = \nf~ B$.
\end{enumerate}
\end{prop}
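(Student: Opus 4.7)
The plan is to reduce part 2 to part 1 and prove part 1 by structural induction on the map expression $f : A \tto B$. Part 2 is immediate from part 1 by instantiating $N := \J$: this gives $\nf~A = \nfx{A}~\J = \nfx{B}~\J = \nf~B$. The whole argument therefore reduces to part 1. A key design choice in stating part 1 is to prove the equality for \emph{every} $N$ simultaneously rather than just $N = \J$; as I will explain below, this generality is essential in the tensor case and is the only real subtlety — every other case is mechanical.

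I will then work through each rule of $A \tto B$ in turn. The cases for $\id$ and $f \comp g$ are immediate: reflexivity, and chaining the two induction hypotheses, respectively. The three axiom cases $\lam$, $\rho$, $\al$ all reduce to unfolding the recursive definition of $\nfx{-}$ on both sides. For instance, $\nfx{(A \ot B) \ot C}~N = \nfx{A}~(\nfx{B}~(\nfx{C}~N)) = \nfx{A \ot (B \ot C)}~N$ settles $\al$ with no further work, and similarly $\nfx{\I \ot A}~N = \nfx{A}~N$ and $\nfx{A \ot \I}~N = \nfx{A}~(\nfx{\I}~N) = \nfx{A}~N$ settle $\lam$ and $\rho$. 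Note that none of the skew-monoidal equational axioms play any role here, since this proposition is about the raw map expressions, not about $\doteq$-equivalence classes.

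The tensor case $f \ot g : A \ot B \tto C \ot D$ (with $f : A \tto C$ and $g : B \tto D$) is the only one that genuinely uses the freedom in $N$. Starting from $\nfx{A \ot B}~N = \nfx{A}~(\nfx{B}~N)$, I will first apply the induction hypothesis for $g$ at the parameter $N$ to rewrite $\nfx{B}~N$ as $\nfx{D}~N$, and then apply the induction hypothesis for $f$ at the \emph{new} parameter $\nfx{D}~N$ to rewrite $\nfx{A}~(\nfx{D}~N)$ as $\nfx{C}~(\nfx{D}~N)$, which equals $\nfx{C \ot D}~N$ by definition. Without the $N$-parameterization, the induction hypothesis for $f$ would only be available at $N = \J$ and this step could not be carried out; this is precisely why part 1 is stated with an arbitrary $N$, and it is the only non-routine aspect of the proof.
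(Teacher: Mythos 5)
Your proof is correct and follows exactly the paper's approach: part 1 by structural induction on the map expression $f$ (the paper leaves the six cases implicit; your case analysis, including the observation that the tensor case needs the statement generalized over an arbitrary $N$, fills them in correctly), and part 2 as an immediate consequence by taking $N = \J$.
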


\begin{proofnoqed}
\begin{enumerate}
\item By induction on $f$. 
\item Immediate from (1).
\hfill$\Box$
\end{enumerate}
\end{proofnoqed}

\begin{prop}\label{nfemb}
For any $N : \Nf$, $\nf~(\emb~ N) = N$.
\end{prop}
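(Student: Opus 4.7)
The plan is to proceed by straightforward induction on the normal form $N : \Nf$. The statement to be proved, after unfolding the definition of $\nf$, is that $\nfx{\emb~N}~\J = N$, and the two constructors of $\Nf$ give the two cases. The key observation motivating this direct induction is that the recursive structure of $\emb$ mirrors that of $\nfx{-}$: whenever $N$ is built with $\vot$, the embedding $\emb~N$ starts with a $\ot$ whose left factor is a variable, which is exactly the shape that $\nfx{-}$ processes cleanly using its $\v X$ and $\ot$ clauses.

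In the base case $N = \J$, I would just compute: $\nf~(\emb~\J) = \nfx{\I}~\J = \J$ by the defining equation of $\nfx{-}$ on $\I$. In the inductive case $N = X \vot N'$ with induction hypothesis $\nf~(\emb~N') = N'$, I would unfold $\emb~(X \vot N') = \v X \ot \emb~N'$, then use the $\ot$ clause of $\nfx{-}$ to rewrite $\nfx{\v X \ot \emb~N'}~\J$ as $\nfx{\v X}~(\nfx{\emb~N'}~\J)$, which is $\nfx{\v X}~(\nf~(\emb~N'))$. Applying the induction hypothesis replaces the inner expression with $N'$, and a final use of the $\v X$ clause of $\nfx{-}$ gives $X \vot N'$, as required.

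I do not foresee any real obstacle, and in particular no strengthening of the inductive hypothesis is needed here (unlike some similar normalization-by-evaluation proofs, where one has to generalize over the accumulator). The reason is that in the recursive call we always reduce $\nfx{\emb~N'}~\J$, i.e.\ the accumulator stays fixed at $\J$; the induction reaches exactly the statement we want, not a stronger one about arbitrary accumulators. If one preferred, one could of course prove the more general lemma $\nfx{\emb~M}~N = M \mathbin{++} N$ for a suitably defined concatenation on $\Nf$ and then specialize, but this is unnecessary overhead for the present goal.
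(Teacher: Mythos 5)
Your proof is correct and matches the paper's, which is exactly the same straightforward induction on $N$ (the paper states only ``By induction on $N$''). Both cases compute as you describe, and your observation that no generalization of the accumulator is needed is accurate.
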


\begin{proof} By induction on $N$.
\end{proof}

\begin{prop}\label{fnfemb}
For any $f : A \tto \emb~N$, $\nf~A = N$.
\end{prop}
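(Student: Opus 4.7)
The plan is to observe that this proposition falls out immediately by chaining the two preceding results, so no fresh induction or case analysis is needed. First I would apply Proposition \ref{fnf}(2) to the given map $f : A \tto \emb~N$, obtaining the equation $\nf~A = \nf~(\emb~N)$. Then I would apply Proposition \ref{nfemb} to rewrite the right-hand side, which yields $\nf~(\emb~N) = N$. Transitivity of equality then gives $\nf~A = N$, as required.

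The ``hard part,'' such as it is, is really already done in Proposition \ref{fnf}(1), which is what carries the induction on map expressions; the work there is to check that each constructor of $\tto$ (identity, composition, tensor, and the three structural maps $\lam$, $\rho$, $\al$) preserves the action of $\nfx{-}$ on an arbitrary $N : \Nf$. Once that lemma is in hand, the present proposition is just an instance obtained by specializing $N$ in the composite $\nf~(-) = \nfx{-}~\J$ and then applying round-tripping of embedding and normalization (Proposition \ref{nfemb}).

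In Agda terms, the proof is essentially a two-line term: \textsf{trans}-ing a use of the function corresponding to Proposition \ref{fnf}(2) at $f$ with a use of Proposition \ref{nfemb} at $N$. There is no choice to make and no obstacle beyond trusting the earlier lemmas; the point of isolating this proposition as a separate statement is that it is exactly the shape of fact needed later to recognize that any object expression admitting a map into an embedded normal form must itself normalize to that very normal form.
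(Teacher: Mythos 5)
Your proof is correct and matches the paper's own argument exactly: the paper proves this proposition as ``an immediate combination of Propositions \ref{fnf}(2) and \ref{nfemb},'' which is precisely the chain $\nf~A = \nf~(\emb~N) = N$ you describe.
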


\begin{proof}
An immediate combination of Propositions \ref{fnf}(2) and \ref{nfemb}.
\end{proof}

Let now $\nmx{-} : \Pi A : \Tm.~ \Pi N : \Nf.~ A \ot \emb~N \tto
\emb~(\nfx{A}~N)$ be the function defined by
\[
\begin{array}{l}
\nmx{\v X}~N = \id \\
\nmx{\I}~N = \lam \\
\nmx{A \ot B}~N = \nmx{A}~(\nfx{B}~N) \comp \id \ot \nmx{B}~N \comp \al
\end{array}
\]
To every object expression we assign a ``normalizing'' map expression
with the function $\nm : \linebreak \Pi A: \Tm.~A \tto \emb~(\nf~A)$ defined by
\[
\nm~A = \nmx{A}~\J \comp \rho
\]

We are ready to state our result.

\begin{prop}[Main lemma]\label{huh} 
\begin{enumerate}
\item For any $f : A \tto B$ and $N : \Nf$, $\nmx{A}~ N \doteq
  \nmx{B}~N \comp f \ot \id$. (This statement is well-formed, as
  $\nfx{A}~N = \nfx{B}~N$ by Proposition \ref{fnf}(1).)
\item For any $f : A \tto B$, $\nm~A \doteq \nm~B \comp f$. (This
statement is well-formed, as $\nf~A = \nf~B$ by Proposition \ref{fnf}(2).)
\end{enumerate}
\end{prop}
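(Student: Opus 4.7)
The plan is to prove part~(1) by structural induction on the derivation $f : A \tto B$ and then derive part~(2) as a corollary. For part~(2), instantiating~(1) at $N = \J$ gives $\nmx{A}~\J \doteq \nmx{B}~\J \comp f \ot \id$; post-composing with $\rho$ and rewriting the tail by the naturality axiom $\rho \comp f \doteq f \ot \id \comp \rho$ yields $\nm~A = \nmx{A}~\J \comp \rho \doteq \nmx{B}~\J \comp \rho \comp f = \nm~B \comp f$.

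For part~(1), the six cases split into congruence cases and ``constant'' cases. The congruence cases are routine. For $f = \id$, the claim reduces to $\id \ot \id \doteq \id$ and an identity law of $\comp$. For $f = g \comp h$, the two inductive hypotheses chain together via associativity of $\comp$ and the functoriality rule $(h \comp f) \ot (k \comp g) \doteq h \ot k \comp f \ot g$. For $f = g \ot h$ with $g : A \tto C$, $h : B \tto D$, I would unfold both $\nmx{A \ot B}~N$ and $\nmx{C \ot D}~N$, apply the inductive hypothesis to $h$ at $N$ and to $g$ at $\nfx{B}~N$ (which equals $\nfx{D}~N$ by Proposition~\ref{fnf}(1)), and align the results using naturality of $\al$ together with functoriality.

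The three constant cases each invoke one of the skew-monoidal laws. For $f = \lam$, unfolding $\nmx{\I \ot A}~N$ to $\lam \comp \id \ot \nmx{A}~N \comp \al$ lets me slide $\lam$ past $\id \ot \nmx{A}~N$ by naturality of $\lam$ and then collapse $\lam \comp \al$ to $\lam \ot \id$ via law~(c). For $f = \rho$, unfolding $\nmx{A \ot \I}~N \comp \rho \ot \id$ to $\nmx{A}~N \comp \id \ot \lam \comp \al \comp \rho \ot \id$ leaves a tail matching exactly the right-hand side of law~(b) $\id \doteq \id \ot \lam \comp \al \comp \rho \ot \id$, so the entire composite collapses to $\nmx{A}~N$. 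For $f = \al$, I would fully unfold both $\nmx{(A \ot B) \ot C}~N$ and $\nmx{A \ot (B \ot C)}~N \comp \al \ot \id$, use functoriality to break the nested tensored composite $\id \ot (\nmx{B}~(\nfx{C}~N) \comp \id \ot \nmx{C}~N \comp \al)$ into atomic pieces, and then apply naturality of $\al$ once to push an $\id \ot \nmx{C}~N$ past an $\al$; after cancelling common prefixes what remains is exactly the pentagon $\al \comp \al \doteq \id \ot \al \comp \al \comp \al \ot \id$, i.e.\ law~(e).

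The $\al$ case is the main obstacle: it demands several careful applications of naturality of $\al$ and functoriality of $\ot$ to reshape the two normalizing maps into a form where the pentagon can be invoked, and the types of the various instances of $\al$ and $\id$-tensors must be tracked with care. It is worth noting that laws~(a) and~(d) play no role in this lemma; the Main Lemma relies only on~(b),~(c),~(e), together with the naturality and functoriality equations.
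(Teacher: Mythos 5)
Your proof follows the paper's route exactly: part (1) by induction on the six constructors of $f$, with part (2) derived via naturality of $\rho$, which is precisely the structure the paper's (deliberately terse) proof defers to its Agda development; each of your cases, including the pentagon manipulation for $f = \al$, checks out. The one point of tension is your closing remark that laws (a) and (d) play no role: the paper's proof asserts that all five coherence conditions are needed ``and exactly in the versions chosen''. You are in fact right that (a) and (d) do not enter the Main Lemma itself --- they are consumed by Proposition \ref{hei}, where (a) gives $\nm~(\emb~\J) \doteq \id$ and (d) handles the $X \vot N$ case --- so the paper's remark is best read as a claim about the coherence theorem as a whole rather than about Lemma \ref{huh}(1) in isolation.
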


\begin{proofnoqed} 
\begin{enumerate}
\item By induction on $f$. This is a tedious but simple proof with six
  cases, some are tricky for formalization! Read the Agda
  development.

  Of course the proof relies on the equality of map expressions being
  induced by the five coherence conditions. All of them are needed and
  exactly in the versions chosen (remember that for conditions (a),
  (b) there were multiple inequivalent options).

\item Follows from (1).
\hfill $\Box$
\end{enumerate}
\end{proofnoqed}

\begin{prop}\label{hei}
For any $N : \Nf$, $\nm~(\emb~N) \doteq \id$. (This statement
is well-formed, as  $\nf~(\emb~N) = N$ by Proposition \ref{nfemb}.)
\end{prop}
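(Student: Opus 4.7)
The plan is to induct on $N : \Nf$, mirroring the inductive definitions of $\emb$ and $\nmx{-}$.

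Base case $N = \J$: since $\emb~\J = \I$, unfolding gives $\nm~(\emb~\J) = \nmx{\I}~\J \comp \rho = \lam \comp \rho$, which is $\doteq \id$ by the axiom $\lam \comp \rho \doteq \id$.

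Inductive case $N = X \vot N'$: write $M = \emb~N'$, so that the inductive hypothesis reads $\nmx{M}~\J \comp \rho \doteq \id$. Unfolding $\nmx{\v X \ot M}~\J$ and using $\nmx{\v X}~K = \id$ for every $K$ yields $\nmx{\v X \ot M}~\J \doteq (\id \ot \nmx{M}~\J) \comp \al$, whence
\[
\nm~(\v X \ot M) \doteq (\id \ot \nmx{M}~\J) \comp \al \comp \rho.
\]
The crucial step is axiom (d) in its equational form $\al \comp \rho \doteq \id \ot \rho$. Applying it, followed by the bifunctor laws $\id \ot \id \doteq \id$ and $(h \comp f) \ot (k \comp g) \doteq h \ot k \comp f \ot g$, rewrites the right-hand side as $\id \ot (\nmx{M}~\J \comp \rho)$, which by the inductive hypothesis is $\doteq \id \ot \id \doteq \id$.

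I do not foresee a conceptual obstacle: the argument is a short chain of $\doteq$-rewrites following the inductive structure of $N$. The only nonroutine step is the use of axiom (d), which is precisely what is needed to strip the outer $\v X \ot -$ off $\rho$ so that the inductive hypothesis becomes applicable. In a formalization, the main tedium lies in spelling out the transitivity and congruence closure of $\doteq$ explicitly, together with the fact that $\nfx{\emb~N'}~\J = N'$ (which is Proposition \ref{nfemb}) so that the types of the intermediate terms line up.
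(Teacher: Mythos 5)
Your proof is correct and follows exactly the route the paper intends: the paper's proof is simply ``by induction on $N$'', and your base case (axiom $\lam \comp \rho \doteq \id$) and step case (axiom (d), $\al \comp \rho \doteq \id \ot \rho$, plus the bifunctoriality laws) fill in that induction correctly, with the types lining up via Proposition \ref{nfemb} as you note.
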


\begin{proof}
By induction on $N$.
\end{proof}

\begin{prop}[Main theorem] \label{nnmemb}
For any $f : A \tto \emb~N$, $\nm~A \doteq f$. (This statement
is well-formed, because $\nf~A = N$ by Proposition \ref{fnfemb}).
\end{prop}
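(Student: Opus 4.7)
The plan is to derive this as an essentially immediate corollary of the Main Lemma (Proposition~\ref{huh}) together with Proposition~\ref{hei}, using only the standard rules of $\doteq$ (congruence with composition and the left identity law).

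First, I would instantiate Proposition~\ref{huh}(2) with the given map $f : A \tto \emb~N$. This produces the equation
\[
\nm~A \doteq \nm~(\emb~N) \comp f,
\]
which is well-formed because Proposition~\ref{fnfemb} guarantees $\nf~A = \nf~(\emb~N) = N$, so both sides have the same codomain $\emb~N$.

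Next I would appeal to Proposition~\ref{hei} to rewrite the left factor: $\nm~(\emb~N) \doteq \id$, still at codomain $\emb~N$. By congruence of $\doteq$ under composition (one of the inductive rules defining $\doteq$), I obtain
\[
\nm~(\emb~N) \comp f \doteq \id \comp f,
\]
and the left-identity axiom $\id \comp f \doteq f$ finishes the chain: $\nm~A \doteq \nm~(\emb~N) \comp f \doteq \id \comp f \doteq f$, so $\nm~A \doteq f$ by transitivity.

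There is no genuine obstacle here; all the real work has already been absorbed into Proposition~\ref{huh}(2) (whose proof is the tedious six-case induction on $f$) and into Proposition~\ref{hei} (induction on $N$). The only subtlety to keep in mind, especially for a formalization, is that every equation in the chain must be at well-defined types: the identity of normal forms supplied by Propositions~\ref{fnf} and~\ref{fnfemb} is what lets us speak of the two sides of each $\doteq$-step as parallel map expressions, and in a dependently typed formalization one would need to transport along these equalities (or arrange the definitions so that the transports are definitional) before the congruence and identity-law steps can be applied.
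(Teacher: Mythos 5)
Your proof is correct and is exactly the paper's argument: the paper also derives the theorem by combining Proposition~\ref{huh}(2) with Proposition~\ref{hei}, and your chain $\nm~A \doteq \nm~(\emb~N) \comp f \doteq \id \comp f \doteq f$ just spells out the details the paper leaves implicit. Your remark about the well-formedness/transport issue matches the paper's own discussion of $\subst$ and $\femb$ in the Agda formalization.
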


\begin{proof}
By combining Propositions \ref{huh}(2) and \ref{hei}.
\end{proof}

Of course nothing prevents us from playing the reverse game. We can
define a set $\Nfr$ and functions $\embr : \Nfr \to \Tm$, $\nfxr{-} :
\Tm \to \Nfr \to \Nfr$ and $\nfr : \Tm \to \Nfr$:
\[
\infer{\Jr : \Nfr}{}
\quad
\infer{R \votr X : \Nfr}{X : \Var & R : \Nfr}
\]
\[
\begin{array}{l}
\embr~\Jr = \I \\
\embr~(R \votr X) = \embr~R \ot \v X \\[1em]
\nfxr{\v X}~R = R \votr X \\
\nfxr{\I}~R = R \\
\nfxr{A \otimes B}~R = \nfxr{B}~(\nfxr{A}~R) \\[1em]
\nfr~ A = \nfxr{A}~ \Jr
\end{array}
\]

Further we can define functions $\nmxr{-} : \Pi A : \Tm.~ \Pi R :
\Nfr.~\embr~(\nfxr{A}~R) \tto \embr~R \ot A$ and $\nmr : \Pi A:
\Tm.~\embr~(\nfr~A) \tto A$ and propositions as above hold for
them. Furthermore, $\nf~A = \nf~B$ if and only if $\nfr~A= \nfr~B$.

\medskip

Thus we see that for two object expressions $A$ and $B$ to have
exactly one map expression between them (up to $\doteq$), it suffices to
have $A = \embr~R$ or $B = \emb~N$ for some $R$ or $N$ (i.e., $A$ in
reverse normal form or $B$ in normal form).

It is important to notice that this is merely a sufficient condition
for a unique map expression between two object expressions. It is
perfectly possible have a unique map expression between $A$ and $B$
even if $A$ is not a reverse normal form and $B$ is not a normal
form. The simplest example is $A = B = \v X$, since we have only the
map $\id : \v X \tto \v X$.

At the same time it is easy to find pairs of object expressions $A$
and $B$ with $\nf~A = \nf~B$ (which is the same as $\nfr~A= \nfr~B$)
with no or several map expressions between them.

Some examples of absence of map expressions: 
\begin{itemize}
\item there is no map exression from $\v X$ to $\I \ot \v X$
  (although both have $X \vot \J$ as the normal form);
\item there is no map expression from $\v X \ot ((\v Y \ot \v Z) \ot
  \I)$ to $(\v X \ot \v Y) \ot (\v Z \ot \I)$ (despite both having $X
  \vot (Y \vot (Z \vot \J))$ as the normal form).
\end{itemize}

Some examples of multiple map expressions: 
\begin{itemize}
\item $\id \not\doteq \rho \comp \lam  : \I \ot \I \tto \I \ot \I$, 
\item $\id \not\doteq \rho \ot \id \comp \id \ot \lam \comp \al : (\v X \ot \I) \ot \v Y \tto (\v X \ot \I) \ot \v Y$,
\item $\lam \not\doteq \id \ot \lam : \I \ot (\I \ot \v X) \tto \I \ot \v X$.
\end{itemize}

\paragraph{Rewriting interpretation}

Let us see what we have established from the rewriting perspective.

The elements of $\Tm$ can be thought of as \emph{terms} over $\Var$
made of $\I$ and $\otimes$. The elements of $A \tto B$ should be
thought of as \emph{rewrites} of $A$ into $B$: $\lambda$, $\rho$,
$\alpha$ are rewrite rules, $\otimes$ allows applying rewrite rules
inside a term, $\id$ is the nil rewrite, $\comp$ is sequential
composition of two rewrites. The relation $\doteq$ provides a
congruence on rewrites of one term to another.

With $\Nf$ we have carved out from the set of all terms $\Tm$ some
terms that we have decided to consider to be in \emph{normal
  form}. $\emb$ is the inclusion of this set $\Nf$ into $\Tm$.

With $\nf~A$ we have assigned to every term a particular normal form,
which we define to be its (unique) normal form. Notice that this is an
entirely rewriting-independent definition of normalization.

Proposition 1 says that one term can only be written into another, if their
normal forms are the same. Proposition 2 says that a normal form's normal
form is itself. Proposition 3 is the obvious conclusion that, if a term
rewrites to a normal form, it is that term's normal form.

With $\nm~A$ we have at least one (canonical) rewrite of any term $A$
to its normal form.

Proposition 4 says that the canonical normalizing rewrite $\nm~A$ of a term
$A$ factors through all other rewrites of it. Proposition 5 says that
the canonical normalizing rewrite of a normal form (into itself) is
the nil rewrite.

Proposition 6 tells us that any normalizing rewrite of a term $A$ is equal
to the canonical normalizing rewrite $\nm~ A$. Thus all normalizing
rewrites of $A$ are equal.

\paragraph{Categorical meaning}

Categorically speaking we have established a relationship between
two categories $\CTm$, which is the free skew-monoidal category over
$\Var$, and $\CNf$, which is the free strictly monoidal category over
$\Var$.

The category $\CTm$ has $\Tm$ as the set of objects, $(A \tto
B)/\doteq$ the set of maps between objects $A$, $B$, $\id$ the
identity, $\circ$ composition, $I$ as the unit, $\ot$ the tensor,
$\lam$, $\rho$ and $\al$ the unitors and associator. The category
$\CNf$ is discrete and has $\Nf$ (the set of lists over $\Var$) as the
set of objects. 

$\emb$ with the trivial map mapping is clearly a functor from $\CNf$
to $\CTm$.

With $\nf : \Tm \to \Nf$ we provide an object mapping for a functor
from $\CTm$ to $\CNf$. Proposition 1, stating that, for any $f : A \to B$,
we have $\nf~A = \nf~B$, tells us that $\nf$ with the constant
identity map mapping is a functor from $\CTm$ to $\CNf$ (``if $f : A
\to B$, then $L~f : L~A \to L~B$'').

Proposition 2, stating that $\nf~(\emb~N) = N$, establishes that identity
has the correct type for being the counit, if $\nf$ were a left
adjoint of $\emb$ (``$\epsilon : L~(R~N) \to N$'').

Proposition 3 concludes from Propositions 1 and 2 that $f : A \tto \emb~N$
implies $\nf~A = N$. This shows that constant identity is a good
candidate for the left transpose operation of such an adjunction (``if
$f : A \to R~N$, then $f^\dagger : L~A \to N$''). The proof is nothing
but the standard definition of left transpose from the counit
(``$f^\dagger = \epsilon \comp L f$'').

The polymorphic function $\nm : A \tto \emb~(\nf~A)$ is,
by its typing, a candidate for the unit of the adjunction (``$\eta : A
\to R~(L~A)$''). The main lemma (Proposition 4), stating that, if $f : A
\tto B$, then $\nm~A \doteq \nm~B \comp f$, establishes that $\nm$ is
a natural transformation (``$R~(L~f) \comp \eta = \eta \comp f$'').

Proposition 5, stating $\nm~(\emb~N) = \id$, establishes one of the
adjunction laws (``$R~\epsilon \comp \eta = \id$'').

The main theorem (Proposition ), which is a conclusion from
Propositions 4 and 5 and states that $f : A \tto \emb~N$ implies
$\nm~A = f$, is a proof of the equivalent adjunction law in terms of
the left transpose (``$R f^\dagger \comp \eta = f$'').

The other adjunction laws are trivial as $\CNf$ is a discrete
category.  Hence our coherence result really establishes that
$\nf$ and $\emb$ provide an adjunction between $\CTm$ and $\CNf$.

\bigskip

In fact, much more can be proved. As already mentioned above, the
category $\CNf$ is strictly monoidal. The unit is $\J$ (the empty
list) and the tensor is $\bt$ (concatenation of lists). More, it is
the free strictly monoidal category over $\Var$. The functors $\nf :
\CTm \to \CNf$ and $\emb : \CNf \to \CTm$ are lax
skew-monoidal. Further, the unit and (trivially) the counit are lax
skew-monoidal too, so the adjunction between $\CTm$ and $\Nf$ is a lax
skew-monoidal adjuction. The Agda development has the full proofs.

Also, I have downplayed here a further fact (which the proof however
relies on implicitly) that the adjunction between $\CTm$ and $\CNf$
factors through the discrete functor category $[\CNf,\CNf]$.

\paragraph{Equality of normal forms in Agda}

My development makes heavy use of equality reasoning on normal
forms. Notice in particular that the statements of Propositions 4--6 are
well-formed only since Propositions 1-3 (stating equalities of normal
forms) hold.

In the Agda formalization, based on intensional type theory, I model
equality of normal forms with propositional equality (the identity
type) on $\Nf$. In the sense of the Agda development, the maps of
$\CNf$ are exactly proofs of equality propositions $N \equiv
N'$. In particular, identity is $\refl$ and composition is
$\trans$. Discreteness is the uniqueness of identity proofs principle.

A consequence of this is that, if $N$ and $N'$ are equal only
propositionally with a proof $p$, and not definitionally, then we
cannot form the proposition $f \doteq g$ for two map expressions $f : A \tto
\emb~N$ and $g : A \tto \emb~N'$. What is well-formed is $\subst~(A \tto
  \emb~{-})~p f \doteq g$. For example, one Agda formulation of Proposition 6
    could be
\[
\Pi f : A \tto \emb~N.~\subst~(A \tto \emb~{-})~(\fnfemb~f)~(\nm~A) \doteq f
\]
where $\fnfemb$ is the proof of Proposition 3, i.e., of $\Pi f : A \tto
\emb~N.~\nf~A \equiv N$, the left transpose operation of the
adjunction between $\nf$ and $\emb$.

Working with $\subst$ (or alternatives, like \texttt{with} and
pattern-matching on propositional equality proofs, or
\texttt{rewrite}) is tedious.

I was therefore relieved to find that, for this project, there is a
neat alternative. Substitution for $N$ in the type $A \tto \emb~N$ of a
map expression $f$ based on $p : N \equiv N'$ can be replaced by
postcomposition of $f$ with $\femb~p: \emb~N \to \emb~N'$ where 
\[
\femb~p= \subst~(\emb~N \tto \emb~{-})~p~(\id~\{\emb~N\})
\] 
is the identity map expression on $\emb~N$ with its codomain
``adjusted'' to $\emb~N'$.

For any $f : A \tto \emb~N$ and $p : N \equiv N'$, it is the case that
$\femb~p \comp f \doteq \subst~(A \tto \emb~{-})~p~f$.

In particular, Proposition 6 can say,
\[
\Pi f : A \tto \emb~N.~\femb~(\fnfemb~f) \circ \nm~A \doteq f
\]

An additional advantage is that the map mapping part of the functor
$\emb$ from $\CNf$ to $\CTm$, which is otherwise obscured in the Agda
formalization, becomes manifest. For example, Proposition 6 explicitly
obtains the form ``$R f^\dagger \comp \eta = f$'', in which we
immediately recognize one of the adjunction laws.

\section{Related work}

Skew-monoidal categories were first studied as such by Szlach\'anyi
\cite{Szl:skemcb} in the context of structures for quantum
computing. They immediately attracted the interest of Lack, Street,
Buckley and Garner \cite{LS:skemsw,LS:triosm,BGLS:skemcc}. Lack and
Street \cite{LS:triosm} proved a coherence theorem, which is different
from the one here: they give a necessary and sufficient condition for
equality two parallel maps of the free skew-monoidal category.

I first met a skew-monoidal category in my work with Altenkirch and
Chapman \cite{ACU:monnnb} on relative monads: we noticed and made use
of the skew-monoidal structure (non-endo)-functor categories. The
context categories of Blute, Cockett and Seely \cite{BCS:catccu} have
the skew-monoidal category data and laws as part of the structure.

Some other weakened versions of monoidal categories with the unitors
and associator not isomorphisms are the pseudocategories of Burroni
\cite{Bur:tcat} (like left skew-monoidal categories but with $\lam$
``expanding'') and Grandis d-lax 2-categories \cite{Gra:laxtcd} ($\al$
``associating to the left'').

Laplaza \cite{Lap:cohani} studied coherence for a version of
semimonoidal categories with associativity not an isomorphism in the
1970s.

Coherence is generally related to equational reasoning and through
that to term rewriting. Beke \cite{Bek:cattrk} considered replacing
the question of uniqueness of equality proofs (equality of parallel
maps in categories with isomorphisms only) with uniqueness of
(normalizing) rewrites of maps (equality of parallel maps in a more
general setting). He also asked whether coherence could be proved for
structures like skew-monoidal categories.

\section{Conclusion and future work}

My main conclusion is the same as Beke's \cite{Bek:cattrk}: for
skew-structured categories (with natural transformations instead of
natural isomorphisms), coherence is not about uniqueness of equality
proofs in an equational theory, but about uniqueness of rewriting
(typically normalization) proofs in a rewrite system. A skew coherence
theorem can give a new insight into the proof of the corresponding
non-skew theorem. In my case, I realized I could heavily build on the
proof of Beylin and Dybjer \cite{BD:extpcm} of coherence for monoidal
categories. It literally felt that all of my proof was already present
in theirs, only the theorem was missing!

My next goal is to formulate and prove a similar coherence theorem for
skew-closed categories of Street \cite{Str:sekcc}, a skew version of
Eilenberg and Kelly's (non-monoidal) closed categories
\cite{EK:cloc}. A coherence theorem for closed categories was obtained
by Laplaza \cite{Lap:cohnmc}. Like in this paper, I aim at a proof
formalized in Agda. Coherence proofs tend to have a conceptually
interesting high-level structure, but underneath they involve many
tedious uninspiring case distinctions; it is more than easy to make
mistakes.

\paragraph{Acknowledgements}
I thank my anonymous reviewers for their useful comments.

This research was supported by the ERDF funded Estonian CoE project
EXCS and ICT National Programme project ``Coinduction'', the Estonian
Science Foundation grant no.~9475 and the Estonian Ministry of
Education and Research target-financed research theme no.~0140007s12.

\nocite{*}

\bibliographystyle{eptcs}

\newcommand{\doi}[1]{\href{http://dx.doi.org/#1}{doi: #1}}

\end{document}